\newtheorem{theorem}{Theorem}
\newtheorem{lemma}[theorem]{Lemma}
\newtheorem{conjecture}[theorem]{Conjecture}
\title{Snake-in-the-Box Codes for Rank Modulation under Kendall's $\tau$-Metric}
\author{Yiwei Zhang and Gennian Ge
\thanks{The research of G. Ge was supported by the National Natural Science Foundation of China under Grant No. 61171198 and Grant No. 11431003, the Importation and Development of High-Caliber Talents Project of Beijing Municipal Institutions, and Zhejiang Provincial Natural Science Foundation of China under Grant No. LZ13A010001.}
\thanks{Y. Zhang is with the School of Mathematical Sciences, Zhejiang University,
Hangzhou 310027, China (e-mail: rexzyw@163.com).}
\thanks{G. Ge is with the School of Mathematical Sciences, Capital Normal University,
Beijing 100048, China. He is also with Beijing Center for Mathematics and Information Interdisciplinary Sciences, Beijing 100048, China (e-mail: gnge@zju.edu.cn).}
}
\begin{document}

\maketitle

\begin{abstract}
For a Gray code in the scheme of rank modulation for flash memories, the codewords are permutations and two consecutive codewords are obtained using a push-to-the-top operation. We consider snake-in-the-box codes under Kendall's $\tau$-metric, which is a Gray code capable of detecting one Kendall's $\tau$-error. We answer two open problems posed by Horovitz and Etzion. Firstly, we prove the validity of a construction given by them, resulting in a snake of size $M_{2n+1}=\frac{(2n+1)!}{2}-2n+1$. Secondly, we come up with a different construction aiming at a longer snake of size $M_{2n+1}=\frac{(2n+1)!}{2}-2n+3$. The construction is applied successfully to $S_7$.
\end{abstract}

\begin{keywords}
Flash memory, rank modulation, permutations, Gray codes, snake-in-the-box codes
\end{keywords}

\section{Introduction}
Flash memory is a non-volatile storage medium both electrically programmable and erasable. It is currently widely used due to its reliability, high storage density and relatively low cost. It incorporates a set of cells maintained at a set of levels of charge to encode information. The chief disadvantage of flash memories is their inherent asymmetry between cell programming (injecting cells with charge) and cell erasing (removing charge from cells). While raising the charge level of a cell is an easy operation, reducing the charge level from a single cell is very difficult. In the current technology, the process of a charge reducing operation requires completely erasing a whole large block to which the cell belongs and then reprogramming, which will limit the lifetime of a flash memory. Therefore, over-programming (increasing charge level on a cell above the desired amount) is a severe problem. For this reason, during a programming cycle in real application, charge is injected over several iterations, gradually approaching the designated level. This process will be time-consuming. Moreover, flash memories meet common errors due to charge leakage and reading disturbance.

In order to overcome these problems, the novel framework of {\it rank modulation} is introduced in \cite{Jiang}. Instead of encoding information with the absolute values of charge levels, data is represented by the relative rankings of the charge levels on a group of cells. That is, if we have $n$ cells and $c_1,c_2,\dots,c_n \in \mathbb{R}$ represent the charge levels, then this group of cells is said to encode the permutation $\sigma\in S_n$ such that $c_{\sigma(1)} > c_{\sigma(2)} > \dots > c_{\sigma(n)}$. In this framework, we save us the trouble to deal with errors caused by injection of extra charge or due to charge leakage which only affect the absolute values of charge levels but do not affect the relative rankings. However, sometimes the errors in the charge levels may be large enough to cause some disturbance in the relative rankings. To detect and/or correct such errors we need an appropriate distance measure. Several metrics on permutations are used for this purpose such as Kendall's $\tau$-metric \cite{Barg}, \cite{Jiang2}, \cite{Mazumdar} and the $l_\infty$-metric \cite{Klove}, \cite{Tamo}. In this paper we will only focus on Kendall's $\tau$-metric.

The Kendall's $\tau$-distance \cite{Kendall} between two permutations $\pi_1$ and $\pi_2$ in $S_n$ is the minimum number of adjacent transpositions required to obtain $\pi_2$ from $\pi_1$, where an adjacent transposition is an exchange of two distinct adjacent elements. For example, the Kendall's $\tau$-distance between $\pi_1=[1,2,3,4]$ and $\pi_2=[2,3,1,4]$ is $2$ as we may do the adjacent transpositions $[1,2,3,4]\rightarrow[2,1,3,4]\rightarrow[2,3,1,4]$. Distance one between two permutations indicates an exchange of two adjacent cells, due to a small change in their charge levels which switches their relative ranking. It is further suggested firstly in \cite{Jiang}, and later in \cite{Gad2}, \cite{Wang}, that the only programming operation allowed is raising the charge level of a cell above all the other cells, which is called a ``push-to-the-top" operation. In this manner, over-programming is no longer an issue.

Gray codes using the ``push-to-the-top" operations under Kendall's $\tau$-metric will be the main objective of this rank modulation scheme. The Gray code is first introduced in \cite{Gray} and an excellent survey on Gray codes is given in \cite{Savage}. If we do not consider any distance restriction among codewords, then Jiang {\it et al.}\cite{Jiang} present Gray codes traversing the entire set of permutations. The usage of Gray codes for rank modulation is also discussed in \cite{Gad},\cite{Gad2} and \cite{Jiang2}. Gray codes for rank modulation which detect a single error under a given metric are known as the snake-in-the-box codes. Snake-in-the-box codes are usually discussed in the context of binary codes in the Hamming scheme (see \cite{Abbott} and references therein).

It is of our desire to construct snake-in-the-box codes as large as possible. Yehezkeally and Schwartz \cite{Yehezkeally} give an inductive construction of a snake-in-the-box code under Kendall's $\tau$-metric of size $M_{2n+1}=(2n+1)(2n-1)M_{2n-1}$ in $S_{2n+1}$, using a code of size $M_{2n-1}$ in $S_{2n-1}$. In \cite{Yehezkeally} they also deal with the problem under the $l_\infty$-metric. Later Horovitz and Etzion \cite{Horovitz} improve the inductive construction to $M_{2n+1}=((2n+1)2n-1)M_{2n-1}$, where the initial code is of size $57$ in $S_5$. They also propose a direct construction aiming at a snake of size $\frac{(2n+1)!}{2}-2n+1$ and it is applied successfully to $S_7$ and $S_9$ via computer search. They conjecture that this framework can work for all odd integers and leave it as an open problem. They also ask the problem if there is a better construction. In this paper, we give a rigorous proof for their construction. Then we also come up with a new construction aiming at a longer snake of size $M_{2n+1}=\frac{(2n+1)!}{2}-2n+3$, which is applied successfully to $S_7$. Thus, we answer the two open problems posed by Horovitz and Etzion.

The rest of the paper is organized as follows. In Section \ref{sec2} we define the basic concepts of snake-in-the-box codes in the rank modulation scheme. In Section \ref{sec3} we restate the construction by Horovitz and Etzion. In Section \ref{sec4} we give a proof verifying the validity of their construction. In Section \ref{sec5} we propose our new construction and give a longer snake-in-the-box code in $S_7$ and we conjecture that it can be applied to $S_{2n+1}$ for any $n\ge3$. We conclude the paper in Section \ref{sec6}.

\section{Preliminaries} \label{sec2}

In this section we follow \cite{Horovitz} and \cite{Yehezkeally} to give some definitions and notations for the snake-in-the-box codes in the rank modulation scheme.

Let $[n]$ denote $\{1,2,\dots,n\}$. Let $\pi=[a_1,a_2,\dots,a_n]$ be a permutation over $[n]$ such that for each $i\in[n]$ we have that $\pi(i)=a_i$. This form is known as the {\it vector notation} for permutations. Another useful notation to describe a permutation is its {\it cyclic notation}, where a permutation is expressed as a product of disjoint cycles corresponding to its orbits. For example, the vector notation $[3,4,5,2,1]$ is equivalent to the cyclic notation $(135)(24)$. All the permutations form the group $S_n$ known as the symmetric group on $[n]$ with $|S_n|=n!$. For $\sigma,\pi \in S_n$, their composition, denoted by $\sigma\pi$, is the permutation for which $\sigma\pi(i)=\sigma(\pi(i))$ for all $i\in[n]$.

Given a set $\mathcal{S}$ and a subset of transformations $T\subset \{f|f:\mathcal{S}\rightarrow\mathcal{S}\}$, a {\it Gray code} over $\mathcal{S}$ of size $M$, using transformations from $T$, is a sequence $C=(c_0,c_1,\dots,c_{M-1})$ of $M$ distinct elements from $\mathcal{S}$, called {\it codewords}, such that for each $j\in[M-1]$ there exists some $t_j\in T$ for which $c_j=t_j(c_{j-1})$. The Gray code is called {\it cyclic} if we further have some $t\in T$ such that $c_0=t(c_{M-1})$. Throughout this paper we only focus on cyclic Gray codes.

In the context of rank modulation for flash memories, $\mathcal{S}=S_n$ and the set of transformations $T$ comprises of {\it push-to-the-top operations}. That is, $T=\{t_2,t_3,\dots,t_{n}\}$ where $t_i$ is defined by
\begin{equation*}
t_i([a_1,\dots,a_{i-1},a_i,a_{i+1},\dots,a_n])=[a_i,a_1,\dots,a_{i-1},a_{i+1},\dots,a_n].
\end{equation*}
and a {\it p-transition} will be an abbreviated notation for a push-to-the-top operation.

A sequence of p-transitions will be called a {\it transitions sequence}. An initial permutation $\pi_0$ and a transitions sequence $t_{x_1},t_{x_2},\dots,t_{x_l}$, $x_i\in\{2,3,\dots,n\}$, $1\le i \le l$ together define a sequence of permutations $\pi_0,\pi_1,\dots,\pi_{l-1},\pi_l$, where $\pi_i=t_{x_i}(\pi_{i-1})$ for each $i,1\le i \le l$. This sequence is a cyclic Gray code if $\pi_l=\pi_0$ and for each $0\le i<j\le l-1$, $\pi_i\neq\pi_j$.

Given a permutation $\pi=[a_1,a_2,\dots,a_n]\in S_n$, an {\it adjacent transposition} is an exchange of two adjacent elements $a_i,a_{i+1}$, for some $1\le i \le n-1$, resulting in the permutation $[a_1,\dots,a_{i-1},a_{i+1},a_i,a_{i+2},\dots,a_n]$. The {\it Kendall's $\tau$-distance} between two permutations $\sigma$ and $\pi$, denoted by $d_\mathcal{K}(\sigma,\pi)$, is the minimum number of adjacent transpositions required to transform one permutation into the other. A {\it snake-in-the-box code} is a Gray code with further restriction that any two permutations in the code have their Kendall's $\tau$-distance at least two. That is, it is capable of detecting one Kendall's $\tau$-error. We will call such a snake-in-the-box code a $\mathcal{K}$-snake. We further denote a $\mathcal{K}$-snake of size $M$ with permutations from $S_n$ as an $(n,M,\mathcal{K})$-snake. A $\mathcal{K}$-snake can be represented by listing either the whole sequence of codewords, or the transitions sequence along with the initial permutation.

In \cite{Yehezkeally} it is proved that a Gray code with permutations from $S_n$ using only p-transitions on odd indices is a $\mathcal{K}$-snake. By starting with an even permutation and using only p-transitions on odd indices we get a sequence of even permutations, i.e., a subset of $A_n$, the alternating group of order $n$. This observation saves us the need to check whether a Gray code is in fact a $\mathcal{K}$-snake, at the cost of restricting the permutations in the $\mathcal{K}$-snake to the set of even permutations. However, the cost is not a severe problem since that the following assertions are also proved in \cite{Yehezkeally}.

$\bullet$ If $C$ is an $(n,M,\mathcal{K})$-snake then $M\le \frac{|S_n|}{2}$;

$\bullet$ If $C$ is an $(n,M,\mathcal{K})$-snake which contains a p-transition on an even index then $M\le \frac{|S_n|}{2}-\frac{1}{n-1}{{\lfloor n/2\rfloor-1} \choose 2}$.

This motivates not to use p-transitions on even indices. Since we merely use p-transitions on odd indices, we will only talk about snake-in-the-box codes in $S_{2n+1}$.

\section{The construction of Horovitz and Etzion} \label{sec3}

In this section we restate a direct construction of Horovitz and Etzion in \cite{Horovitz}, aiming at a $\mathcal{K}$-snake of size $M_{2n+1}=\frac{(2n+1)!}{2}-2n+1$. They conjecture that the construction is valid for all odd integers $2n+1\ge5$ and verify the validity for $S_5$, $S_7$ and $S_9$ via computer search.

Firstly, we make a partition on $A_{2n+1}$ into disjoint classes according to the last two ordered elements in the permutation. That is, a class denoted as $[x,y]$ consists of all the even permutations $\pi=[a_1,a_2,\dots,a_{2n+1}]\in A_{2n+1}$ with $a_{2n}=x$ and $a_{2n+1}=y$. There are totally $2n(2n+1)$ classes and each class contains $\frac{(2n-1)!}{2}$ permutations. We further divide each class into $\frac{(2n-2)!}{2}$ subclasses according to the cyclic order of the first $2n-1$ elements in the permutations. Denote each subclass in a class, say $[x,y]$, by $[\alpha]-[x,y]$ where $\alpha$ is the cyclic order of the first $(2n-1)$ elements. (Note that in the sequel the letters $\alpha,\beta,\gamma\dots$ in a vector notation for a permutation stand for a bunch of numbers, possibly just one number or even empty, and its size and contents can be easily inferred by contexts.) For example, a class [1,2] in $S_7$ consists of all permutations $\pi=[a_1,a_2,\dots,a_7]$ ending with $a_6=1$ and $a_7=2$. And therein a subclass $[3,4,5,6,7]-[1,2]$ consists of the permutations $(3,4,5,6,7,1,2),(7,3,4,5,6,1,2),(6,7,3,4,5,1,2),(5,6,7,3,4,1,2)$ and $(4,5,6,7,3,1,2)$. Obviously such a subclass constitutes a $\mathcal{K}$-snake with the transitions sequence consisting of $(2n-1)$ p-transitions $t_{2n-1}$. From now on we refer to this structure as a {\it necklace}.

The next procedure is to merge some necklaces into a larger $\mathcal{K}$-snake. To do this, we have to follow some rules and the rules are described by the following 3-uniform hypergraph, which is of vital importance to the construction.

Define the 3-uniform hypergraph $H_{2n+1}=(V_{2n+1},E_{2n+1})$ as follows. The vertices correspond to all the classes $[x,y]$ of $S_{2n+1}$. For any distinct $x,y,z\in[2n+1]$, an edge named $\langle x,y,z\rangle$ connects the vertices $[x,y]$, $[y,z]$ and $[z,x]$. A {\it nearly spanning tree} $T_{2n+1}$ on this hypergraph is a tree containing all the vertices except for the vertex [2,1]. For example, we may choose $T_5$ containing the following nine edges: $\langle 1,2,3\rangle,\langle 1,2,4\rangle,\langle1,2,5\rangle,\langle1,5,3\rangle,\langle2,3,5\rangle,\langle1,3,4\rangle,\langle2,4,3\rangle,\langle1,4,5\rangle,\langle2,5,4\rangle$. $T_{2n+1}$ can be recursively constructed from $T_{2n-1}$ by adding the following edges: the edges $\langle x,x+1,2n\rangle$ for each $x$, $2\le x \le 2n-2$, the edges $\langle x,x+1,2n+1\rangle$ for each $x$, $2\le x \le 2n-2$ and then the edges $\langle1,2,2n\rangle,\langle1,2n,2n-1\rangle,\langle1,2n+1,2n-1\rangle,\langle1,2n,2n+1\rangle,\langle2,2n+1,2n\rangle$. The following Figure \ref{tree} which appears in \cite{Horovitz} illustrates how to get $T_7$ from $T_5$. The rectangles and circles represent the edges and vertices in $T_5$ respectively while the dashed rectangles and double circles represent the edges and vertices added to obtain $T_7$.

\begin{figure}[h]
\centering
\includegraphics[height=12cm,width=20cm]{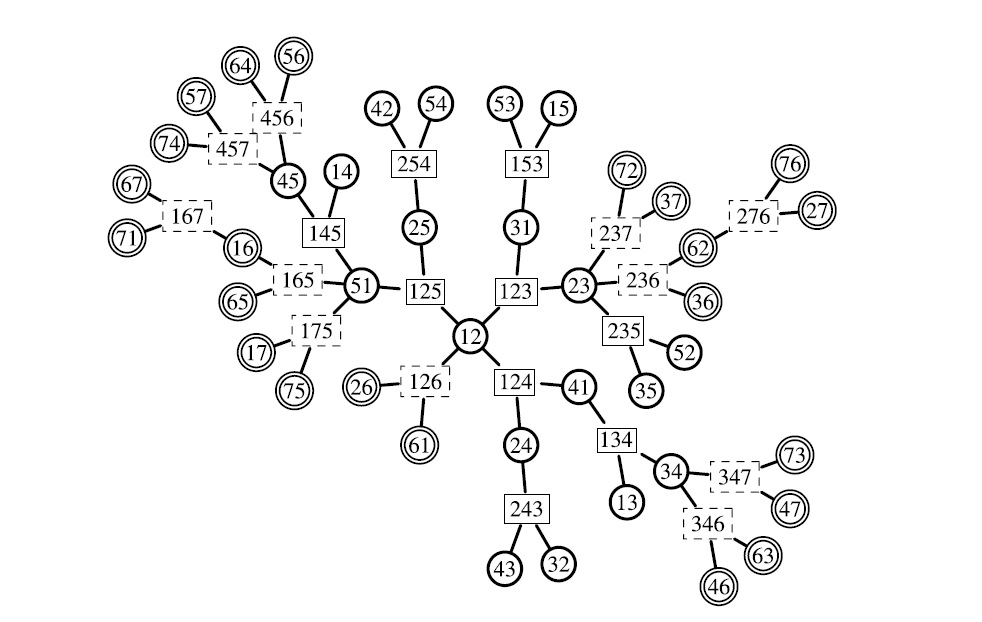}
\caption{Obtaining $T_7$ from $T_5$.} \label{tree}
\end{figure}

After defining the nearly spanning tree $T_{2n+1}$, we now state the rule given by the tree to merge necklaces into a larger $\mathcal{K}$-snake. Start from any necklace $[\alpha]-[1,2]$ in the class $[1,2]$. We choose the edges in $T_{2n+1}$ sequentially (according to the sequence given above). When meeting the edge $\langle x,y,z\rangle$, the already constructed $\mathcal{K}$-snake must contain exactly only one necklace in the union of classes $[x,y]$, $[y,z]$ and $[z,x]$. Without loss of generality we assume an $[x,y]$-necklace belongs to the $\mathcal{K}$-snake. Now we want to merge a $[y,z]$-necklace and a $[z,x]$-necklace into the $\mathcal{K}$-snake. Split the already constructed $\mathcal{K}$-snake at the position right after $[\beta,z,x,y]$ where $\beta$ represents the first $(2n-2)$ elements of the permutation. Such a position surely exists since the existing $[x,y]$-necklace is a cyclic structure on the first $(2n-1)$ positions. We then insert a $[y,z]$-necklace and a $[z,x]$-necklace here as follows. At the splitting point, make a p-transition $t_{2n+1}$ and get $[y,\beta,z,x]$. Then write the whole $[z,x]$-necklace which starts from $[y,\beta,z,x]$ and ends up with $[\beta,y,z,x]$. Another p-transition $t_{2n+1}$ gives $[x,\beta,y,z]$ followed by the whole $[y,z]$-necklace ending up with $[\beta,x,y,z]$. A final p-transition $t_{2n+1}$ will lead us back to $[z,\beta,x,y]$ which is exactly the original permutation right after the splitting point. An example is shown in Figure \ref{57}, giving a $\mathcal{K}$-snake of size 57 in $S_5$. The predefined nearly spanning tree allows us to finally construct a $\mathcal{K}$-snake, containing exactly one necklace in each class $[x,y]$ except for $[2,1]$. From now on we refer to this structure as a {\it chain}. A chain can be constructed as above by choosing any initial necklace $[\alpha]-[1,2]$ and we name this chain as $c[\alpha]$. And it is shown in \cite[Corollary 4]{Horovitz} that the permutations of all the classes except for $[2,1]$ can be partitioned into disjoint chains.

\begin{figure}[h]
\centering
\begin{tikzpicture}[scale=1]
     \tikzstyle{edge} = [draw,thick,->,black]
     \tikzstyle{point}= [fill=black,inner sep=1pt, circle, minimum width=5pt]
	 \node at (-18,0.6)  {$3|5|4$};
     \node at (-18,0.3)  {$4|3|5$};
     \node at (-18,0)  {$5|4|3$};
     \node at (-18,-0.3) {$1|1|1$};
     \node at (-18,-0.6) {$2|2|2$};
     \node at (-17,-1) {$\uparrow$insertion};
     \draw[edge] (-17,0) -- (-16,0);
     \node at (-14,0.6)  {$3|5|4|2|5|4|1|5|4$};
     \node at (-14,0.3)  {$4|3|5|4|2|5|4|1|5$};
     \node at (-14,0)    {$5|4|3|5|4|2|5|4|1$};
     \node at (-14,-0.3) {$1|1|1|3|3|3|2|2|2$};
     \node at (-14,-0.6) {$2|2|2|1|1|1|3|3|3$};
     \node at (-14.08,-1) {$\uparrow$insertion};
     \draw[edge] (-12.2,0) -- (-11.2,0);
     \node at (-8.6,0.6)  {$3|5|2|3|5|1|3|5|4|2|5|4|1|5|4$};
     \node at (-8.6,0.3)  {$4|3|5|2|3|5|1|3|5|4|2|5|4|1|5$};
     \node at (-8.6,0)    {$5|4|3|5|2|3|5|1|3|5|4|2|5|4|1$};
     \node at (-8.6,-0.3) {$1|1|4|4|4|2|2|2|1|3|3|3|2|2|2$};
     \node at (-8.6,-0.6) {$2|2|1|1|1|4|4|4|2|1|1|1|3|3|3$};
     \draw[edge] (-6,0) -- (-5,0);
     \node[point] (v1) at (-4.5,0) {$$};
     \node[point] (v2) at (-4,0) {$$};
     \node[point] (v3) at (-3.5,0) {$$};
\end{tikzpicture}
\begin{tikzpicture}[scale=2]
	 \node at (0,0.4)  {$3|2|1|3|2|5|3|2|4|3|1|5|3|1|2|3|1|4|3|5|2|1|5|2|4|5|2|3|5|1|4|5|1|2|5|1|3|5|4|2|1|4|2|3|4|2|5|4|1|3|4|1|2|4|1|5|4$};
     \node at (0,0.2)  {$4|3|2|1|3|2|5|3|2|4|3|1|5|3|1|2|3|1|4|3|5|2|1|5|2|4|5|2|3|5|1|4|5|1|2|5|1|3|5|4|2|1|4|2|3|4|2|5|4|1|3|4|1|2|4|1|5$};
     \node at (0,0)  {$5|4|3|2|1|3|2|5|3|2|4|3|1|5|3|1|2|3|1|4|3|5|2|1|5|2|4|5|2|3|5|1|4|5|1|2|5|1|3|5|4|2|1|4|2|3|4|2|5|4|1|3|4|1|2|4|1$};
     \node at (0,-0.2) {$1|5|4|4|4|1|1|1|5|5|2|4|4|4|5|5|5|2|2|1|4|3|3|3|1|1|1|4|4|2|3|3|3|4|4|4|2|2|1|3|5|5|5|1|1|1|3|3|2|5|5|5|3|3|3|2|2$};
     \node at (0,-0.4) {$2|1|5|5|5|4|4|4|1|1|5|2|2|2|4|4|4|5|5|2|1|4|4|4|3|3|3|1|1|4|2|2|2|3|3|3|4|4|2|1|3|3|3|5|5|5|1|1|3|2|2|2|5|5|5|3|3$};
\end{tikzpicture}
\caption{Merging necklaces into chains, $M_5=57$.} \label{57}
\end{figure}
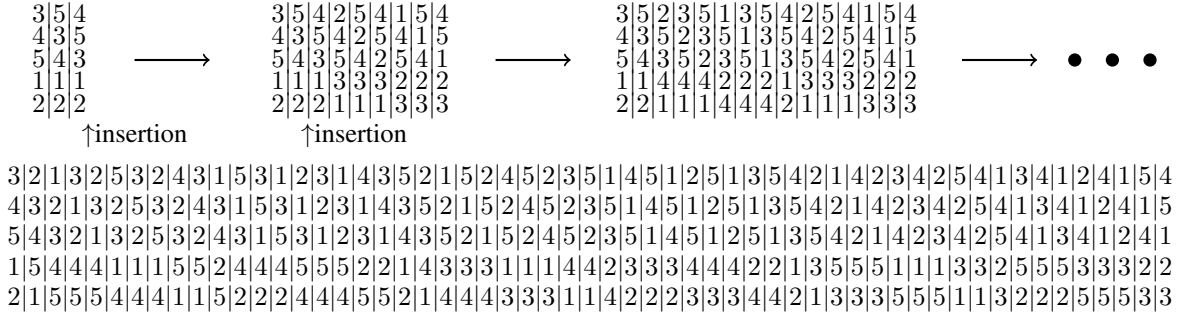

So far we have totally $\frac{(2n-2)!}{2}$ chains using up all the permutations from all classes except for the class $[2,1]$. The next procedure is to apply these unused necklaces in the class [2,1] to merge these chains into a larger $\mathcal{K}$-snake. The following lemma is proved in \cite[Lemma 11]{Horovitz}.

\begin{lemma} \label{merge}
Let $x$ be an integer such that $3\le x \le 2n+1$, let $\alpha$ be a permutation on $[2n+1]\backslash\{x,1,2\}$, and assume that the permutations $[\alpha,1,x,2]$ and $[\alpha,2,1,x]$ are contained in two distinct chains. We can merge these two chains via the necklace $[\alpha,x]-[2,1]$.
\end{lemma}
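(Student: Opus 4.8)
The plan is to prove Lemma~\ref{merge} by an explicit local surgery: I would delete exactly one transition from each of the two chains and reconnect the resulting pieces into a single cyclic sequence that, along the way, swallows all $2n-1$ permutations of the necklace $[\alpha,x]-[2,1]$. Write $A=[\alpha,1,x,2]\in C_1$ and $B=[\alpha,2,1,x]\in C_2$ for the two given permutations, and observe that $A$ sits in an $[x,2]$-necklace of cyclic order $[\alpha,1]$ while $B$ sits in a $[1,x]$-necklace of cyclic order $[\alpha,2]$. First I would locate the two transitions to be cut: inside its necklace $A$ is followed by $A'=[1,\alpha,x,2]$ through a single $t_{2n-1}$, and inside its necklace $B$ is followed by $B'=[2,\alpha,1,x]$ through a single $t_{2n-1}$. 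I then replace these two edges by the detour
\[
A \xrightarrow{\,t_{2n+1}\,} B' \rightsquigarrow B \xrightarrow{\,t_{2n+1}\,} [x,\alpha,2,1] \rightsquigarrow [\alpha,x,2,1] \xrightarrow{\,t_{2n+1}\,} A',
\]
where the first $\rightsquigarrow$ runs through all of $C_2$ from $B'$ back to $B$, the second $\rightsquigarrow$ is the clean run of $t_{2n-1}$'s through the necklace $[\alpha,x]-[2,1]$ from $[x,\alpha,2,1]$ to $[\alpha,x,2,1]$, and upon reaching $A'$ one runs through all of $C_1$ back to $A$.

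A direct evaluation of the three bridging steps (for instance $t_{2n+1}([\alpha,1,x,2])=[2,\alpha,1,x]=B'$, and likewise for the other two) confirms that each is a genuine push-to-the-top on the odd index $2n+1$; together with the $t_{2n-1}$'s inside the inserted necklace, every transition of the merged sequence is on an odd index, so by the result of \cite{Yehezkeally} quoted in Section~\ref{sec2} the output is automatically a $\mathcal{K}$-snake. It then remains to check that the object is one cycle, not two: since $A$ and $B$ lie in \emph{distinct} chains, cutting the edge $A\to A'$ from $C_1$ and the edge $B\to B'$ from $C_2$ and adding the three bridges splices $C_1$, $C_2$ and the necklace into a single closed walk (this is exactly where the distinctness hypothesis is used; had they lain in one chain the surgery would split it). No permutation is repeated, because the $2n-1$ permutations of $[\alpha,x]-[2,1]$ belong to class $[2,1]$ and hence to no chain, while $A,A',B,B'$ and the two necklace endpoints are pairwise distinct and of the correct parity.

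The step I expect to be the main obstacle is the implicit claim underlying the surgery, namely that the two edges I wish to cut, $A\to A'$ and $B\to B'$, genuinely occur as single $t_{2n-1}$-transitions in $C_1$ and $C_2$. A within-necklace transition can fail to be present only if the construction inserted another necklace at precisely that rotation; but an insertion at $A=[\alpha,1,x,2]$ or at $B=[\alpha,2,1,x]$ would correspond to processing the hypergraph edge on the vertices $[x,2]$, $[2,1]$, $[1,x]$ (that is, $\langle x,2,1\rangle=\langle 1,x,2\rangle$), which touches the excluded vertex $[2,1]$ and therefore never appears in the nearly spanning tree $T_{2n+1}$. Hence no such insertion is ever made, both edges are clean, and the surgery goes through. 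I would isolate this no-insertion fact as the key claim and dispatch it first, after which the verification that the merged sequence is a single valid $\mathcal{K}$-snake consuming exactly the necklace $[\alpha,x]-[2,1]$ is routine.
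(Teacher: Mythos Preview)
Your proposal is correct and matches the merging procedure the paper depicts in Figure~\ref{link} (the paper does not give its own proof of Lemma~\ref{merge} but cites \cite[Lemma~11]{Horovitz} and illustrates the construction by that figure). Your additional verification that the within-necklace transitions $A\to A'$ and $B\to B'$ are never displaced---because the only hyperedge that could trigger an insertion there is $\langle 1,x,2\rangle$, which meets the excluded vertex $[2,1]$ and hence lies outside $T_{2n+1}$---is the one point the figure leaves implicit, and your argument for it is sound.
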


The merging procedure above is called an {\it $M[x]$-connection} and we call the necklace $[\beta]-[2,1]$ as a {\it linkage} where $\beta$ represents the cyclic order of $(\alpha,x)$. The merging procedure is shown in the following Figure \ref{link}. 

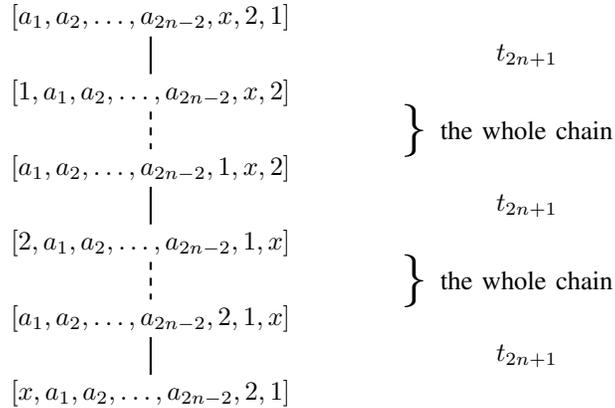
\begin{figure}[h]
\centering
\begin{tikzpicture}[scale=0.5]
     \tikzstyle{edge1} = [draw,thick,-,black]
     \tikzstyle{edge2} = [draw,thick,dashed,-,black]
	 \node at (0,5)  {$[a_1,a_2,\dots,a_{2n-2},x,2,1]$};
     \draw[edge1] (0,4.5) -- (0,3.5);
     \node at (10,4) {$t_{2n+1}$};
     \node at (0,3)  {$[1,a_1,a_2,\dots,a_{2n-2},x,2]$};
     \draw[edge2] (0,2.5) -- (0,1.5);
     \node at (7,2) [scale=2] {$\}$};
     \node at (10,2) {the whole chain};
     \node at (0,1)  {$[a_1,a_2,\dots,a_{2n-2},1,x,2]$};
     \draw[edge1] (0,0.5) -- (0,-0.5);
     \node at (10,0) {$t_{2n+1}$};
     \node at (0,-1) {$[2,a_1,a_2,\dots,a_{2n-2},1,x]$};
     \draw[edge2] (0,-1.5) -- (0,-2.5);
     \node at (7,-2) [scale=2] {$\}$};
     \node at (10,-2) {the whole chain};
     \node at (0,-3) {$[a_1,a_2,\dots,a_{2n-2},2,1,x]$};
     \draw[edge1] (0,-3.5) -- (0,-4.5);
     \node at (10,-4) {$t_{2n+1}$};
     \node at (0,-5) {$[x,a_1,a_2,\dots,a_{2n-2},2,1]$};
\end{tikzpicture}
\caption{An $M[x]$-connection.} \label{link}
\end{figure}

In \cite{Horovitz} the authors mention without proof that if $x\in \{3,4,5\}$ then the permutations $[\alpha,1,x,2]$ and $[\alpha,2,1,x]$ are contained in the same chain, and thus there are no $M[3]$-connections, $M[4]$-connections or $M[5]$-connections. This is actually due to the structure of the nearly spanning tree we choose. We now explain this in detail, together with some other facts concerning $M[x]$-connections for $x>5$.

\begin{theorem} \label{M[X]}
There are no $M[x]$-connections for $x=3,4,5$. For any linkage $[\pi]-[2,1]$ and any $x=2t>5$, $y=2t+1>5$, the $M[x]$ connection via $[\pi]-[2,1]$ connects the chains $[(3x)\pi]-[1,2]$ and $[\sigma\pi]-[1,2]$ while the $M[y]$-connection via $[\pi]-[2,1]$ connects the chains $[(3y)\pi]-[1,2]$ and $[\varsigma\pi]-[1,2]$, where $\sigma$ and $\varsigma$ are permutations on $\{3,4,\dots,2n+1\}$ and using the cyclic notation we have $\sigma=(567\cdots(2t-1)(2t))$ and $\varsigma=(567\cdots(2t-1)(2t+1))$.
\end{theorem}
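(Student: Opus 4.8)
The plan is to turn the statement into a computation of which chain a given codeword lies in, using the single fact that the cyclic order of a necklace determines its chain through a relabelling that propagates along the tree $T_{2n+1}$.

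First I would extract, from the merging rule in Section~\ref{sec3}, the way one tree-edge moves cyclic orders within a chain. Fix a hyperedge $\langle x,y,z\rangle$ whose already-present (``parent'') class is $[x,y]$, met by the chain in the necklace of cyclic order $\mu$ on $[2n+1]\setminus\{x,y\}$; note that $z$ occurs in $\mu$. Reading off the three $t_{2n+1}$ steps of the insertion, the chain then meets $[z,x]$ in the necklace whose order is $\mu$ with the symbol $z$ renamed to $y$, and meets $[y,z]$ in the necklace whose order is $\mu$ with $z$ renamed to $x$. Each such step is \emph{position-preserving}: it fixes the cyclic arrangement and renames a single symbol. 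Hence every directed edge of $T_{2n+1}$ carries a relabelling bijection, and composing these along the unique path from the root $[1,2]$ to a class $[u,v]$ gives a bijection $\phi_{[u,v]}\colon\{3,\dots,2n+1\}\to[2n+1]\setminus\{u,v\}$ such that the necklace of order $\nu$ in class $[u,v]$ lies in the chain $c[\phi_{[u,v]}^{-1}(\nu)]$. Path-independence holds because $T_{2n+1}$ is a tree, and well-definedness of the partition into chains is \cite[Corollary 4]{Horovitz}.

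Next I would reduce the theorem to two such relabellings. The linkage $[\pi]-[2,1]$ together with $x$ fixes $\alpha$ through $\pi=(\alpha,x)$, and the codewords to be joined are $[\alpha,1,x,2]$ --- lying in class $[x,2]$ in the necklace of order $\nu_1=(\alpha,1)$ --- and $[\alpha,2,1,x]$ --- lying in class $[1,x]$ in the necklace of order $\nu_2=(\alpha,2)$. The key elementary observation is that $\nu_1$ is $\pi$ with $x$ renamed to $1$ and $\nu_2$ is $\pi$ with $x$ renamed to $2$; writing $g_1,g_2$ for these renamings, the two chains joined by the $M[x]$-connection are $c[(\phi_{[x,2]}^{-1}g_1)(\pi)]$ and $c[(\phi_{[1,x]}^{-1}g_2)(\pi)]$, where $\phi_{[x,2]}^{-1}g_1$ and $\phi_{[1,x]}^{-1}g_2$ are honest permutations of $\{3,\dots,2n+1\}$. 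The theorem is then exactly the assertion that for even $x=2t>5$ these two permutations are $(3\,x)$ and $\sigma=(5\,6\cdots(2t))$ in some order, that for odd $y=2t+1>5$ they are $(3\,y)$ and $\varsigma=(5\,6\cdots(2t-1)(2t+1))$, and that for $x\in\{3,4,5\}$ they coincide, so $C_1=C_2$ and no connection occurs.

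The remaining and genuinely laborious step is to evaluate $\phi_{[x,2]}$ and $\phi_{[1,x]}$ by tracing their tree-paths and multiplying the per-edge renamings, which I would organize as an induction following the recursive passage from $T_{2n-1}$ to $T_{2n+1}$. The added ``ladder'' edges $\langle x',x'+1,2n\rangle$ and $\langle x',x'+1,2n+1\rangle$, together with the five special edges through $1,2,2n-1,2n,2n+1$, dictate how the classes $[x,2]$ and $[1,x]$ dangle off the core $T_5$; because consecutive ladder edges share the consecutive symbols $5,6,7,\dots$, the single-symbol renamings telescope precisely into the cycles $\sigma,\varsigma$ and the transposition $(3\,x)$, and the parity of $x$ --- together with the degenerate regime $x\le5$, where the path stays inside $T_5$ and the two products agree --- is exactly what produces the three cases. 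I expect this bookkeeping to be the main obstacle: since the cyclic orders live in $S_{2n+1}$ while the tree is assembled from the smaller $S_{2n-1}$ tree, the inductive hypothesis cannot be quoted verbatim, and one must track how the new symbols $2n,2n+1$ and the five linking edges reroute each path before the telescoping identity can be read off. Once the path and its relabelling product are in hand, the identities $(3\,x)$, $(5\,6\cdots(2t))$, $(5\,6\cdots(2t-1)(2t+1))$ --- and the coincidence for $x=3,4,5$ --- follow by direct substitution.
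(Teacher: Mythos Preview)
Your approach is essentially the paper's own: both arguments trace the unique path in $T_{2n+1}$ from the root $[1,2]$ to the classes $[x,2]$ and $[1,x]$, and read off how the cyclic order of the necklace is relabelled along the way. The paper does this by hand on concrete permutations (working out $x=3$ and $x=6$ explicitly and declaring the rest ``similar but rather tedious''); you package the same trace as a composition of single-symbol bijections $\phi_{[u,v]}$, which is cleaner bookkeeping but not a new idea.

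One concrete correction to save you time in the laborious step: the paths you need do \emph{not} run along the ladder edges $\langle x',x'+1,2n\rangle$. The class $[x,2]$ is always a direct grandchild of the root via the single edge $\langle 2,3,x\rangle$ (parent $[2,3]$), so $\phi_{[x,2]}$ is just $3\mapsto 1,\ x\mapsto 3$, giving $\phi_{[x,2]}^{-1}g_1=(3\,x)$ in one line for every $x>3$. The class $[1,x]$, by contrast, is reached through the \emph{special} edges $\langle 1,2t,2t-1\rangle$, $\langle 1,2t+1,2t-1\rangle$, $\langle 1,2t,2t+1\rangle$ added at each recursive stage, producing the zigzag $[5,1]\to[1,6]\to[7,1]\to[1,8]\to\cdots$; it is these edges, not the ladders, whose relabellings $6\mapsto 5,\ 7\mapsto 6,\ 8\mapsto 7,\ldots$ telescope into $\sigma$ and $\varsigma$. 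With that corrected routing the induction on $n$ you propose becomes a short direct computation rather than a genuine obstacle.
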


\begin{proof}
The merging rule suggested by the nearly spanning tree actually indicates that for any edge $\langle x,y,z\rangle$ in $T_{2n+1}$, the necklaces $[\beta,x]-[y,z]$, $[\beta,y]-[z,x]$ and $[\beta,z]-[x,y]$ are merged into the same chain. It is then straight forward to trace back and find the name of the chain to which a certain necklace or a certain permutation belongs.

For example, let $x=3$. We specify the position of the element ``4" and write the permutation $[\alpha,1,3,2]$ as $\pi_1=[\beta,4,\gamma,1,3,2]$. $\pi_1$ belongs to the same necklace as $\pi_2=[\gamma,1,\beta,4,3,2]$. The edge $\langle2,4,3\rangle$ indicates this necklace is in the same chain as the necklace containing $\pi_3=[\gamma,1,\beta,3,2,4]$. $\pi_3$ belongs to the same necklace as $\pi_4=[\beta,3,\gamma,1,2,4]$. Finally the edge $\langle1,2,4\rangle$ indicates we have the necklace containing $[\beta,3,\gamma,4,1,2]$ in this chain. So the permutation $[\alpha,1,3,2]$ is contained in the chain $c[\beta,3,\gamma,4]$.

Similarly, write the permutation $[\alpha,2,1,3]$ as $\sigma_1=[\beta,4,\gamma,2,1,3]$. $\sigma_1$ belongs to the same necklace as $\sigma_2=[\gamma,2,\beta,4,1,3]$. The edge $\langle1,3,4\rangle$ indicates this necklace is in the same chain as the necklace containing $\sigma_3=[\gamma,2,\beta,3,4,1]$. $\sigma_3$ belongs to the same necklace as $\sigma_4=[\beta,3,\gamma,2,4,1]$. Finally the edge $\langle1,2,4\rangle$ indicates we have the necklace containing $[\beta,3,\gamma,4,1,2]$ in this chain. So the permutation $[\alpha,2,1,3]$ is contained in the chain $c[\beta,3,\gamma,4]$. Summing up the above we conclude that the permutations $[\alpha,1,3,2]$ and $[\alpha,2,1,3]$ are in the same chain. For $x=4,5$ we have a similar procedure. So there are no $M[x]$-connections for $x=3,4,5$.

The remaining statement can be analyzed similarly and we only do as an example for $x=6$ with any linkage $[\pi]-[2,1]=[\alpha,6]-[2,1]$. Specify the position of ``3" and write $[\alpha,1,6,2]$ as $\pi_1=[\beta,3,\gamma,1,6,2]$. Then we can find in the same chain the following permutations one by one: $[\gamma,1,\beta,3,6,2]$, $[\gamma,1,\beta,6,2,3]$, $[\beta,6,\gamma,1,2,3]$, $[\beta,6,\gamma,3,1,2]$. Since $[\pi]=[\alpha,6]=[\beta,3,\gamma,6]$ so we find the name of the chain to be $[(36)\pi]-[1,2]$. Specify the position of ``5" and write $[\alpha,2,1,6]$ as $\sigma_1=[\beta',5,\gamma',2,1,6]$ and we can find in the same chain the following permutations one by one: $[\gamma',2,\beta',5,1,6]$, $[\gamma',2,\beta',6,5,1]$, $[\beta',6,\gamma',2,5,1]$, $[\beta',6,\gamma',5,1,2]$. Since $[\pi]=[\alpha,6]=[\beta',5,\gamma',6]$ so we find the name of the chain to be $[(56)\pi]-[1,2]$.

The remaining proof for other values of $x$ is proved in a similar but rather tedious way and thus we omit it.
\end{proof}

Define a graph $\mathcal{G}_{2n+1}=(\mathcal{V}_{2n+1},\mathcal{E}_{2n+1})$ where the vertices represent the set of chains. Two chains are connected by an edge if and only if they can be merged as Lemma \ref{merge}. Each edge has a {\it sign} $M[x]$ (indicating the merging is an $M[x]$-connection) and a {\it label} $[\alpha,x]-[2,1]$ (indicating the name of the linkage). The problem of merging all chains into a large snake reduces to finding a spanning tree $\mathcal{T}_{2n+1}$ in $\mathcal{G}_{2n+1}$ such that all edges have distinct labels. We require distinct labels since we want to use as many $[2,1]$-necklaces as possible (all except one). Once the spanning tree is found then we are able to merge all the chains and all except one $[2,1]$-necklaces into a $\mathcal{K}$-snake of size $M_{2n+1}=\frac{(2n+1)!}{2}-2n+1$. Horovitz and Etzion \cite{Horovitz} conjecture that the desired spanning tree always exists and verify for $S_7$ and $S_9$ via computer search. We proceed in the next section to give a construction of the spanning tree and thus complete their framework.

It should be remarked that the $\mathcal{K}$-snake constructed this way has an interesting property that its transitions sequence only consists of p-transitions $t_{2n-1}$ and $t_{2n+1}$.

\section{Existence of the spanning tree with distinct labels}  \label{sec4}

We first look into the case $S_7$ as an illustrative example. $\mathcal{G}_7$ consists of $12$ vertices corresponding to the $12$ chains:
\begin{align*}
&c_1=[4,5,6,7,3]-[1,2],     &c_2=[4,6,7,5,3]-[1,2], \\
&c_3=[4,7,5,6,3]-[1,2],     &c_4=[4,7,6,3,5]-[1,2], \\
&c_5=[4,7,3,5,6]-[1,2],     &c_6=[4,3,5,7,6]-[1,2], \\
&c_7=[4,5,7,3,6]-[1,2],     &c_8=[4,3,6,5,7]-[1,2], \\
&c_9=[4,5,3,6,7]-[1,2],     &c_{10}=[4,6,5,3,7]-[1,2], \\
&c_{11}=[4,6,3,7,5]-[1,2],      &c_{12}=[4,3,7,6,5]-[1,2].
\end{align*}
The $12$ linkages ($[2,1]$-necklaces) are:
\begin{align*}
&\eta_1=[4,5,7,6,3]-[2,1],     &\eta_2=[4,6,5,7,3]-[2,1], \\
&\eta_3=[4,7,6,5,3]-[2,1],     &\eta_4=[4,6,7,3,5]-[2,1], \\
&\eta_5=[4,3,5,6,7]-[2,1],     &\eta_6=[4,6,3,5,7]-[2,1], \\
&\eta_7=[4,7,5,3,6]-[2,1],     &\eta_8=[4,7,3,6,5]-[2,1], \\
&\eta_9=[4,3,6,7,5]-[2,1],     &\eta_{10}=[4,5,6,3,7]-[2,1], \\
&\eta_{11}=[4,3,7,5,6]-[2,1],     &\eta_{12}=[4,5,3,7,6]-[2,1].
\end{align*}

As Theorem \ref{M[X]} indicates, $\mathcal{G}_7$ will only contain edges with signs $M[6]$ and $M[7]$. By an $M[6]$-connection, a linkage $[\alpha]-[2,1]$ will connect the chains $[(36)\alpha]-[1,2]$ and $[(56)\alpha]-[1,2]$. Similarly by an $M[7]$-connection, a linkage $[\alpha]-[2,1]$ will connect the chains $[(37)\alpha]-[1,2]$ and $[(57)\alpha]-[1,2]$. Note that we present the chains and linkages above in the exact same order as in \cite{Horovitz}. The difference is that while they present each chain $[\alpha]-[1,2]$ or linkage $[\alpha]-[2,1]$ with $\alpha$ starting from ``3", we instead start from ``4" since it benefits the upcoming analysis.

Now we rename the chains and linkages according to the positions of ``6" and ``7". Suppose ``6" is on the $i$-th position and ``7" is on the $j$-th position. Note that we also have fixed ``4" on the first position. Then a unique chain/linkage will be determined since there will be only one choice to place ``3" and ``7" to get an even permutation. Denote this chain/linkage by $C_{i,j}$ and $L_{i,j}$ respectively for $2\le i,~j\le5$ and $i\neq j$. Then, by an $M[6]$-connection, a linkage $L_{i,j}$ will connect the chains $C_{k,j}$ and $C_{l,j}$ where $k$ and $l$ are the two elements in $\{2,3,4,5\}\backslash \{i,j\}$. Similarly, by an $M[7]$-connection, a linkage $L_{i,j}$ will connect the chains $C_{i,k}$ and $C_{i,l}$ where $k$ and $l$ are the two elements in $\{2,3,4,5\}\backslash \{i,j\}$. Figure \ref{G7} shows the structure of $\mathcal{G}_7$. The next goal is to find a spanning tree $\mathcal{T}_7$ with distinct labels. To do this we first strengthen to find a cycle $\mathcal{C}_7$ with distinct labels connecting all the vertices, and then we delete any edge in the cycle to get a spanning tree as desired. This technique is key to the analysis later. The cycle can be chosen as: for any linkage $(i,j)$ with $j\equiv i-1 \pmod{5}$ we choose the edge corresponding to its $M[6]$-connection and for the other linkages we choose their $M[7]$-connections. The resulting cycle is shown in Figure \ref{G7}. Deleting any edge in this cycle, we get a spanning tree indicating the method to merge all the chains and all but one linkages into a whole $\mathcal{K}$-snake of size $M_7=2515$. Note that the only absent five permutations are those permutations in the linkage corresponding to the edge deleted.

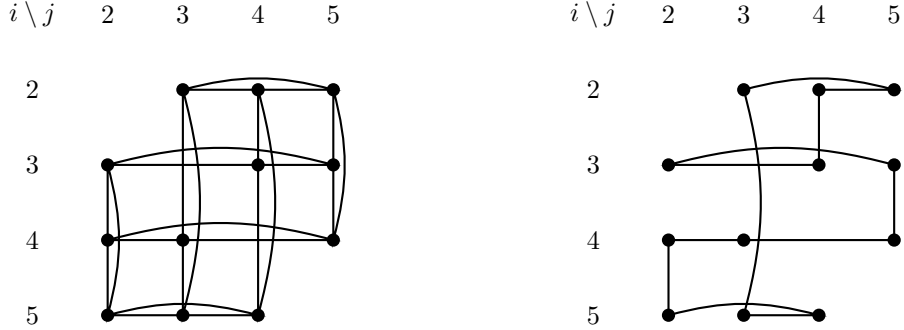
\begin{figure}[h]
\centering
\begin{tikzpicture}[scale=1]
     \tikzstyle{edge} = [draw,thick,-,black]
     \tikzstyle{point}= [fill=black,inner sep=1pt, circle, minimum width=5pt]
	 \node at (-2,2) {$i\setminus j$};
     \node at (-1,2)  {$2$};
     \node at (0,2)  {$3$};
     \node at (1,2)  {$4$};
     \node at (2,2)  {$5$};
     \node at (-2,1)  {$2$};
     \node at (-2,0)  {$3$};
     \node at (-2,-1)  {$4$};
     \node at (-2,-2)  {$5$};
     \node[point] (v1) at (0,1) {$$};
     \node[point] (v2) at (1,1) {$$};
     \node[point] (v3) at (2,1) {$$};
     \node[point] (v4) at (-1,0) {$$};
     \node[point] (v5) at (1,0) {$$};
     \node[point] (v6) at (2,0) {$$};
     \node[point] (v7) at (-1,-1) {$$};
     \node[point] (v8) at (0,-1) {$$};
     \node[point] (v9) at (2,-1) {$$};
     \node[point] (v10) at (-1,-2) {$$};
     \node[point] (v11) at (0,-2) {$$};
     \node[point] (v12) at (1,-2) {$$};
     \draw[edge] (0,1)-- (1,1) -- (2,1) to [bend right=15] (0,1);
     \draw[edge] (-1,0)-- (1,0) -- (2,0) to [bend right=15] (-1,0);
     \draw[edge] (-1,-1)-- (0,-1) -- (2,-1) to [bend right=15] (-1,-1);
     \draw[edge] (-1,-2)-- (0,-2) -- (1,-2) to [bend right=15] (-1,-2);
     \draw[edge] (-1,0)-- (-1,-1) -- (-1,-2) to [bend right=15] (-1,0);
     \draw[edge] (0,1)-- (0,-1) -- (0,-2) to [bend right=15] (0,1);
     \draw[edge] (1,1)-- (1,0) -- (1,-2) to [bend right=15] (1,1);
     \draw[edge] (2,1)-- (2,0) -- (2,-1) to [bend right=15] (2,1);
\end{tikzpicture}
\begin{tikzpicture}[scale=1]
     \tikzstyle{edge} = [draw,thick,-,black]
     \tikzstyle{point}= [fill=black,inner sep=1pt, circle, minimum width=5pt]
     \node at (-5,0) {$$};
	 \node at (-2,2) {$i\setminus j$};
     \node at (-1,2)  {$2$};
     \node at (0,2)  {$3$};
     \node at (1,2)  {$4$};
     \node at (2,2)  {$5$};
     \node at (-2,1)  {$2$};
     \node at (-2,0)  {$3$};
     \node at (-2,-1)  {$4$};
     \node at (-2,-2)  {$5$};
     \node[point] (v1) at (0,1) {$$};
     \node[point] (v2) at (1,1) {$$};
     \node[point] (v3) at (2,1) {$$};
     \node[point] (v4) at (-1,0) {$$};
     \node[point] (v5) at (1,0) {$$};
     \node[point] (v6) at (2,0) {$$};
     \node[point] (v7) at (-1,-1) {$$};
     \node[point] (v8) at (0,-1) {$$};
     \node[point] (v9) at (2,-1) {$$};
     \node[point] (v10) at (-1,-2) {$$};
     \node[point] (v11) at (0,-2) {$$};
     \node[point] (v12) at (1,-2) {$$};
     \draw[edge] (1,1) -- (2,1) to [bend right=15] (0,1);
     \draw[edge] (2,0) to [bend right=15] (-1,0) -- (1,0);
     \draw[edge] (-1,-1) -- (0,-1) -- (2,-1);
     \draw[edge] (0,-2) -- (1,-2) to [bend right=15] (-1,-2);
     \draw[edge] (-1,-2)-- (-1,-1);
     \draw[edge] (2,-1) -- (2,0);
     \draw[edge] (0,-2) to [bend right=15] (0,1);
     \draw[edge] (1,0)-- (1,1);
\end{tikzpicture}
\caption{$\mathcal{G}_7$ and $\mathcal{C}_7$.} \label{G7}
\end{figure}

After this initial case, the construction of $\mathcal{T}_{2n+1}$ now follows in an inductive way. The induction is due to the following lemma proved in \cite[Lemma 16]{Horovitz}.
\begin{lemma} \label{induction}
For each $n\ge4$, $\mathcal{G}_{2n+1}$ consists of $(2n-3)(2n-2)$ disjoint copies of isomorphic graphs to $\mathcal{G}_{2n-1}$, called components. The edges between the vertices of two distinct components are signed only with $M[2n]$ and $M[2n+1]$.
\end{lemma}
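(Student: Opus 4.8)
The plan is to read the entire statement off Theorem \ref{M[X]} once the right vertex partition is fixed. Recall that a chain is named by a necklace $[\alpha]-[1,2]$, i.e.\ by a cyclic order $\alpha$ of $\{3,4,\dots,2n+1\}$, which I root at the symbol $4$ (placing $4$ in the first slot, exactly as was done for $S_7$). For each chain I record the ordered pair $(p,q)$ of slots occupied by the two largest symbols $2n$ and $2n+1$. Since $p,q$ are distinct slots among the $2n-2$ non-root slots, there are $(2n-2)(2n-3)=(2n-3)(2n-2)$ possible pairs, and every chain lies in exactly one of the resulting classes, which I declare to be the components. They have the correct size: once $(p,q)$ and the root are fixed, a chain is determined by the arrangement of the remaining $2n-4$ symbols in the free slots, and for $n\ge4$ exactly half of the $(2n-4)!$ arrangements are even (transposing two free symbols reverses the parity); thus each class has $(2n-4)!/2=|\mathcal{V}_{2n-1}|$ vertices and the sizes sum to $(2n-2)!/2=|\mathcal{V}_{2n+1}|$.

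The inter-component statement is then immediate. By Theorem \ref{M[X]} an $M[x]$-connection through a linkage $[\pi]-[2,1]$ joins the chains $[(3x)\pi]-[1,2]$ and $[\sigma\pi]-[1,2]$, where $\sigma$ is a cycle supported on $\{5,6,\dots,x\}$. If a value-permutation $\tau$ fixes a symbol $v$ and fixes the root $4$, then relabelling a cyclic order by $\tau$ leaves the slot of $v$ unchanged. Hence, when $x\le 2n-1$, both $(3x)$ and $\sigma$ fix $4,2n,2n+1$, so the two endpoints inherit the pair $(p,q)$ of the linkage and lie in the \emph{same} component. By contrast $M[2n]$ moves $2n$ (its two endpoints carry $2n$ in the slots where $\pi$ holds $3$ and $2n-1$, which differ) while fixing $2n+1$, and symmetrically $M[2n+1]$ moves $2n+1$ while fixing $2n$. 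Consequently every edge joining two distinct components carries the sign $M[2n]$ or $M[2n+1]$, while every intra-component edge carries a sign $M[x]$ with $6\le x\le 2n-1$.

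It remains to identify each component with $\mathcal{G}_{2n-1}$. The natural candidate is the map $\phi$ that forgets the two passenger symbols $2n,2n+1$, sending a chain (a cyclic order of $\{3,\dots,2n+1\}$ with $2n,2n+1$ in the fixed slots $(p,q)$) to the cyclic order of $\{3,4,\dots,2n-1\}$ obtained by deleting them, with inverse given by re-inserting $2n,2n+1$ at $(p,q)$. Because $(3x)$ and $\sigma$ commute with this deletion for every $x\le 2n-1$, the connection formulas of Theorem \ref{M[X]} are formally identical in $S_{2n-1}$, so an intra-component $M[x]$-edge and the $M[x]$-edge of $\mathcal{G}_{2n-1}$ produced by the deleted linkage correspond under $\phi$. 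Thus, \emph{provided $\phi$ lands among the permutations that name the vertices of $\mathcal{G}_{2n-1}$}, it is a graph isomorphism onto $\mathcal{G}_{2n-1}$.

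The parity reconciliation in this last clause is where I expect the real work to lie. Deleting $2n$ and $2n+1$ from slots $(p,q)$ removes a number of inversions that depends only on $(p,q)$ and not on the free arrangement (each of the two largest symbols inverts with every smaller symbol to its right, whatever the arrangement); a short count gives an offset $\equiv p+q+[q>p]\pmod 2$. Hence the parity offset is constant on each component, so $\phi$ maps a component bijectively either onto the even cyclic orders (a genuine copy of $\mathcal{V}_{2n-1}$) or, for the components with odd offset, onto the odd cyclic orders. For those components pure deletion must be corrected by a parity-flipping relabelling that nonetheless fixes every $M[x]$-edge; the obstacle is that any such relabelling has to commute with all the cycles $(3x)$ and $(5\cdots x)$ for $6\le x\le 2n-1$ simultaneously, which forces it to fix every free symbol, so a naive transposition will not do. Producing the correct parity-corrected isomorphism --- presumably by exploiting the recursive description of the chains inherited from $T_{2n-1}\subset T_{2n+1}$ rather than a blunt deletion --- is the main technical point; once it is in hand, the rest is the routine check that deletion and re-insertion at $(p,q)$ are mutually inverse on rooted cyclic orders.
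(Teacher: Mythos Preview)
The paper does not give its own proof of this lemma: it simply cites \cite[Lemma 16]{Horovitz}. What the paper does supply, in the paragraph immediately following the lemma, is exactly your component description --- partition the chains (rooted at $4$) by the ordered pair of slots carrying $2n$ and $2n+1$ --- and your reading of Theorem~\ref{M[X]} to see that $M[x]$ with $x\le 2n-1$ keeps both slots fixed while $M[2n]$ and $M[2n+1]$ move them. So the first two thirds of your argument (count of components, size of each, and the inter-component edge claim) are correct and coincide with the paper's own discussion.

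The gap you flag is genuine, and your diagnosis of why a value-relabelling patch fails is right: any $\tau$ that simultaneously commutes with every $(3x)$ and every cycle $(5\cdots x)$ for $6\le x\le 2n-1$ must fix $3$ and $5,6,\dots,2n-1$ pointwise, hence is the identity on the free symbols and cannot flip parity. Two remarks may help orient a repair. First, the lemma only asserts an abstract graph isomorphism, so a sign-preserving map is not required; your commutation obstruction rules out sign-preserving value-relabellings, not all bijections. Second --- and more to the point for the paper's application --- what the induction in Section~\ref{sec4} actually uses is not the bare isomorphism but the consequence that each $C_{i,j}$ carries a Hamiltonian cycle whose labels exhaust $L_{i,j}$. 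That weaker statement is symmetric under swapping the roles of chains and linkages (both sides of Theorem~\ref{M[X]} are left-multiplications by odd permutations), so the ``odd'' image of your deletion map has exactly the same cycle-with-distinct-labels structure as $\mathcal{G}_{2n-1}$ even when it is not literally $\mathcal{G}_{2n-1}$. Framing the induction hypothesis in those terms, rather than insisting on a parity-corrected isomorphism, sidesteps the obstruction you found; but as written your proposal is incomplete at precisely the point you identify.
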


We look deeply into the structure of $\mathcal{G}_{2n+1}$. Let $C_{i,j}$ and $L_{i,j}$ denote respectively the set of all chains and linkages with $(2n)$ on the $i$-th position and $(2n+1)$ on the $j$-th position. As Theorem \ref{link} indicates, $C_{i,j}$ is exactly the so-called component in Lemma \ref{induction} above. The edges corresponding to all linkages in $L_{i,j}$ and all $M[x]$-connections except $x=2n,~2n+1$ are exactly all the edges within $C_{i,j}$. Now, define a graph $\hat{\mathcal{G}}_{2n+1}=(\hat{\mathcal{V}}_{2n+1},\hat{\mathcal{E}}_{2n+1})$ where the vertices correspond to the set $\{C_{ij}:2\le i,j\le 2n-1, i\neq j\}$. For each pair of chains $c_1\in C_{i,j}$ and $c_2\in C_{i',j'}$ such that $c_1$ and $c_2$ are connected in $\mathcal{G}$, draw an edge between $C_{i,j}$ and $C_{i',j'}$ with the same sign and label as the edge connecting $c_1$ and $c_2$ in $\mathcal{G}$. There will be only two signs $M[2n]$ and $M[2n+1]$.

\begin{theorem}
There exists a cycle $\hat{\mathcal{C}}_{2n+1}$ connecting all vertices in $\hat{\mathcal{V}}_{2n+1}$, with the labels coming from distinct $L_{i,j}$.
\end{theorem}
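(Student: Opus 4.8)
The plan is to read the precise edge structure of $\hat{\mathcal{G}}_{2n+1}$ out of the characterization of $M[2n]$- and $M[2n+1]$-connections established above, and then to exhibit the cycle explicitly, generalizing the cycle $\mathcal{C}_7$ used for $S_7$. Write $P=\{2,3,\dots,2n-1\}$ and $m=|P|=2n-2$. For a fixed linkage in $L_{i,j}$ I first locate where the symbols $2n$ and $2n+1$ land in the two chains it connects, by tracking positions under the relabelings $(3,2n)$, $\sigma=(567\cdots(2n-1)(2n))$, $(3,2n+1)$ and $\varsigma=(567\cdots(2n-1)(2n+1))$. The upshot is that if such a linkage carries the symbol $3$ in position $a$ and the symbol $2n-1$ in position $b$ (with $2n$ in position $i$ and $2n+1$ in position $j$), then its $M[2n]$-connection joins the vertices $(a,j)$ and $(b,j)$, while its $M[2n+1]$-connection joins $(i,a)$ and $(i,b)$. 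Thus a single label $(i,j)$, after a free choice of the two auxiliary positions $a,b\in P\setminus\{i,j\}$, offers a choice between a \emph{column edge} $\{(a,j),(b,j)\}$ and a \emph{row edge} $\{(i,a),(i,b)\}$. The remaining symbols $\{5,\dots,2n-2\}$ fill the leftover positions; since interchanging the symbols $3$ and $2n-1$ reverses parity without changing the realized edge, an \emph{even} linkage realizing any prescribed such edge always exists.

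The second step is the observation that, because the cycle must consist of $|\hat{\mathcal{V}}_{2n+1}|=m(m-1)$ edges and there are exactly $m(m-1)$ index pairs $(i,j)$, requiring the labels to come from distinct $L_{i,j}$ is the same as using every pair $(i,j)$ exactly once. Hence it suffices to assign to each label $(i,j)$ one of its two realizable edge types, with a chosen $a,b$, so that the resulting edges form a single Hamiltonian cycle.

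I would then construct this assignment explicitly. Identifying $P$ with $\mathbb{Z}_m$ and reading indices modulo $m$, I route the cycle row by row: in row $i$ I traverse all $m-1$ columns along the Hamiltonian path $(i,i+1)\to(i,i+3)\to\cdots\to(i,i+2)$, whose interior visits the remaining columns in any order, and I label its $t$-th edge $\{s_t,s_{t+1}\}$ by the column two steps ahead, $s_{t+2}$ (cyclically). For $m\ge 4$ these labels are exactly the row-$i$ labels $(i,c)$ with $c\neq i,i+3$, each automatically distinct from its edge's endpoints, and each used via an $M[2n+1]$-connection. Consecutive rows are linked by the column jump $\{(i,i+2),(i+1,i+2)\}$, realized by the one leftover label $(i-1,i+2)$ through an $M[2n]$-connection; this leftover is precisely the ``diagonal'' label $(k,k+3)$ of row $k=i-1$. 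Since row $i$ is entered at $(i,i+1)$, the landing point of the jump from row $i-1$, and exited at $(i,i+2)$, the rows are strung into one cycle that closes up after row $m-1$. Every vertex then has degree two, and every index pair $(i,j)$ is used exactly once (as a row edge when $j\neq i+3$ and as a column jump when $j=i+3$). This reproduces $\mathcal{C}_7$ when $m=4$ and yields $\hat{\mathcal{C}}_{2n+1}$ in general.

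I expect the main obstacle to be the first step: correctly carrying out the position bookkeeping, i.e. verifying that the two relabelings place $2n$ and $2n+1$ so that $M[2n]$-edges stay within a column and $M[2n+1]$-edges within a row, and confirming that a parity-correct (even) linkage realizing each prescribed edge truly exists for all $n\ge3$. Once this flexibility is in hand, the cycle above is fully explicit, and checking that it is a single Hamiltonian cycle with labels from distinct $L_{i,j}$ is routine; deleting any one of its edges then gives a spanning tree of $\hat{\mathcal{G}}_{2n+1}$ with distinct labels, which is what the inductive merging of the components requires.
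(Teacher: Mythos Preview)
Your proposal is correct and follows essentially the same approach as the paper: both constructions build the Hamiltonian cycle row by row, using $M[2n+1]$-connections for the row edges and a single $M[2n]$-connection to jump between consecutive rows, arranging that each pair $(i,j)$ is consumed exactly once. The specific row orderings and the choice of which label is reserved for the column jump differ (you set aside $(k,k+3)$, the paper sets aside $(k,k-1)$), and you make explicit the position bookkeeping and the parity argument that the paper leaves implicit, but the underlying idea is the same.
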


\begin{proof}
For each $L_{i,j}$ with $j\equiv i-1 \pmod{2n-1}$, we choose a linkage in $L_{i,j}$ with ``3" on the $(i-2)$-th position and ``$2n-1$" on the $(i-3)$-th position. Then its $M[2n]$-connection will connect $C_{i-2,j}$ and $C_{i-3,j}$, i.e. connect $C_{i-2,i-1}$ and $C_{i-3,i-1}$.
For each $L_{i,j}$ with $j\equiv i-2 \pmod{2n-1}$, we choose a linkage in $L_{i,j}$ with ``3" on the $(i-1)$-th position and ``$2n-1$" on the $(i+1)$-th position. Then its $M[2n+1]$-connection will connect $C_{i,i-1}$ and $C_{i,i+1}$.
For the other linkages $L_{i,j}$, we choose a linkage in $L_{i,j}$ with ``3" on the $(j+1)$-th position and ``$2n-1$" on the $(j+2)$-th position. Then its $M[2n+1]$-connection will connect $C_{i,j+1}$ and $C_{i,j+2}$.
It is a little tedious but straight forward to check that the edges above constitute the cycle as desired.
\end{proof}

As an illustrative example, the cycle in $\hat{\mathcal{G}}_{9}$ is given in Figure \ref{9}.

\begin{figure}[h]
\centering
\begin{tikzpicture}[scale=1]
     \tikzstyle{edge} = [draw,thick,-,black]
     \tikzstyle{point}= [fill=black,inner sep=1pt, circle, minimum width=5pt]
	 \node at (-3,3) {$i\setminus j$};

     \node at (-2,3)  {$2$};
     \node at (-1,3)  {$3$};
     \node at (0,3)  {$4$};
     \node at (1,3)  {$5$};
     \node at (2,3)  {$6$};
     \node at (3,3)  {$7$};

     \node at (-3,2)  {$2$};
     \node at (-3,1)  {$3$};
     \node at (-3,0)  {$4$};
     \node at (-3,-1)  {$5$};
     \node at (-3,-2)  {$6$};
     \node at (-3,-3)  {$7$};

     \node[point] (v1) at (-1,2) {$$};
     \node[point] (v2) at (0,2) {$$};
     \node[point] (v3) at (1,2) {$$};
     \node[point] (v4) at (2,2) {$$};
     \node[point] (v5) at (3,2) {$$};
     \node[point] (v6) at (-2,1) {$$};
     \node[point] (v7) at (0,1) {$$};
     \node[point] (v8) at (1,1) {$$};
     \node[point] (v9) at (2,1) {$$};
     \node[point] (v10) at (3,1) {$$};
     \node[point] (v11) at (-2,0) {$$};
     \node[point] (v12) at (-1,0) {$$};
     \node[point] (v13) at (1,0) {$$};
     \node[point] (v14) at (2,0) {$$};
     \node[point] (v15) at (3,0) {$$};
     \node[point] (v16) at (-2,-1) {$$};
     \node[point] (v17) at (-1,-1) {$$};
     \node[point] (v18) at (0,-1) {$$};
     \node[point] (v19) at (2,-1) {$$};
     \node[point] (v20) at (3,-1) {$$};
     \node[point] (v21) at (-2,-2) {$$};
     \node[point] (v22) at (-1,-2) {$$};
     \node[point] (v23) at (0,-2) {$$};
     \node[point] (v24) at (1,-2) {$$};
     \node[point] (v25) at (3,-2) {$$};
     \node[point] (v26) at (-2,-3) {$$};
     \node[point] (v27) at (-1,-3) {$$};
     \node[point] (v28) at (0,-3) {$$};
     \node[point] (v29) at (1,-3) {$$};
     \node[point] (v30) at (2,-3) {$$};

     \draw[edge] (0,2)-- (1,2) -- (2,2) -- (3,2) to [bend right=15] (-1,2);
     \draw[edge] (1,1)-- (2,1) -- (3,1) to [bend right=15] (-2,1) -- (0,1);
     \draw[edge] (2,0)-- (3,0) to [bend right=15] (-2,0) -- (-1,0) -- (1,0);
     \draw[edge] (3,-1) to [bend right=15] (-2,-1) -- (-1,-1) -- (0,-1) -- (2,-1);
     \draw[edge] (-2,-2)-- (-1,-2) -- (0,-2) -- (1,-2) -- (3,-2);
     \draw[edge] (-1,-3)-- (0,-3) -- (1,-3) --(2,-3) to [bend right=15] (-2,-3);
     \draw[edge] (0,2) -- (0,1);
     \draw[edge] (1,1) -- (1,0);
     \draw[edge] (2,0) -- (2,-1);
     \draw[edge] (3,-1) -- (3,-2);
     \draw[edge] (-2,-2) -- (-2,-3);
     \draw[edge] (-1,-3) to [bend right=15] (-1,2);

\end{tikzpicture}
\caption{A cycle in $\hat{\mathcal{G}}_{9}$.} \label{9}
\end{figure}

Now the inductive procedure goes as follows. Delete any edge in the cycle $\hat{\mathcal{C}}_{2n+1}$ constructed in $\hat{\mathcal{G}}_{2n+1}$ to get its spanning tree with their labels coming from distinct $L_{i,j}$. Then at most one linkage in $L_{i,j}$ has been occupied in $\hat{\mathcal{C}}_{2n+1}$. $C_{i,j}$ is locally connected by a cycle with distinct labels corresponding to the set of linkages $L_{i,j}$. Deleting the edge corresponding to the occupied linkage, we still have a spanning tree connecting all the chains in $C_{i,j}$. Thus we find a spanning tree with distinct labels for the whole graph $\mathcal{G}_{2n+1}$.

\section{A further improvement on the size of a $\mathcal{K}$-snake}  \label{sec5}

In this section we construct a longer $\mathcal{K}$-snake in $S_7$ of size $M_7=2517$, increasing the construction of Horovitz and Etzion with $M_7=2515$ by $2$.

The basic preparations are exactly the same as the construction above. We first get the $12$ chains which together use up all the permutations except those in the class $[2,1]$. The unused permutations now are those $12$ $[2,1]$-necklaces each of size $5$. Horovitz and Etzion use them as linkages to merge the chains and thus the absence of one of these necklaces is inevitable. How about constructing a $\mathcal{K}$-snake using only the permutations in the class $[2,1]$ first? This is equivalent to constructing a $\mathcal{K}$-snake in $S_5$ and we already have such a $\mathcal{K}$-snake of size 57 in Figure \ref{57}. Now we take some one-to-one map $f:\{1,2,3,4,5\}\rightarrow\{3,4,5,6,7\}$ and add the tails $(2,1)$ to turn the $\mathcal{K}$-snake in $S_5$ into a $\mathcal{K}$-snake in $S_7$. The choice of $f$ should guarantee that the induced $\mathcal{K}$-snake in $S_7$ consists of even permutations.

The next procedure is to insert the $12$ chains into this $\mathcal{K}$-snake. As Lemma \ref{merge} indicates, if the $\mathcal{K}$-snake has two consecutive permutations $[\alpha,x,2,1]$ and $[x,\alpha,2,1]$, $x=6,~7$, then we may insert the two chains containing $[1,\alpha,x,2]$ and $[2,\alpha,1,x]$ respectively. Now if we can find a matching in $\mathcal{G}_7$ whose six edges all correspond to applicable insertions, then we end up with the $\mathcal{K}$-snake of size $2517$ as desired. While there are many matchings in $\mathcal{G}_7$, whether the six edges in a matching all correspond to applicable insertions or not needs to be checked, since the transitions sequence of the $\mathcal{K}$-snake contains lots of p-transitions $t_3$. Ambiguously speaking, the more p-transitions $t_5$, the better. Fortunately, we may do some ``sewing and mending" to the $\mathcal{K}$-snake, due to the fact that $t^{-1}_3 t_5 t^{-1}_3(\pi)=t^{-1}_5 t_3 t^{-1}_5(\pi)$ for every $\pi\in S_7$. We may cut off the segment from $t_3(\pi)$ to $t_3^{-1}t_5(\pi)$, sew $\pi$ and $t_5(\pi)$ together, and then insert the segment at the position between $t^{-1}_5 t_3(\pi)$ and $t_3 t^{-1}_5 t_3(\pi)$ as long as $t^{-1}_5 t_3(\pi)$ and $t_3 t^{-1}_5 t_3(\pi)$ are not within the segment cut off. This modification brings in more p-transitions $t_5$ into the transitions sequence of the $\mathcal{K}$-snake without deleting any existing $t_5$. Now we may insert the $12$ chains in pairs as in Figure \ref{sew}.

\begin{figure}[h]
\centering
\begin{tikzpicture}[scale=2]
     \tikzstyle{edge} = [draw,thick,-,black]
     \tikzstyle{point}= [fill=black,inner sep=1pt, circle, minimum width=5pt]

	 \node at (0,2.4)  {$3|2|1|3|2|5|3|2|4|3|1|5|3|1|2|3|1|4|3|5|2|1|5|2|4|5|2|3|5|1|4|5|1|2|5|1|3|5|4|2|1|4|2|3|4|2|5|4|1|3|4|1|2|4|1|5|4$};
     \node at (0,2.2)  {$4|3|2|1|3|2|5|3|2|4|3|1|5|3|1|2|3|1|4|3|5|2|1|5|2|4|5|2|3|5|1|4|5|1|2|5|1|3|5|4|2|1|4|2|3|4|2|5|4|1|3|4|1|2|4|1|5$};
     \node at (0,2)    {$5|4|3|2|1|3|2|5|3|2|4|3|1|5|3|1|2|3|1|4|3|5|2|1|5|2|4|5|2|3|5|1|4|5|1|2|5|1|3|5|4|2|1|4|2|3|4|2|5|4|1|3|4|1|2|4|1$};
     \node at (0,1.8)  {$1|5|4|4|4|1|1|1|5|5|2|4|4|4|5|5|5|2|2|1|4|3|3|3|1|1|1|4|4|2|3|3|3|4|4|4|2|2|1|3|5|5|5|1|1|1|3|3|2|5|5|5|3|3|3|2|2$};
     \node at (0,1.6)  {$2|1|5|5|5|4|4|4|1|1|5|2|2|2|4|4|4|5|5|2|1|4|4|4|3|3|3|1|1|4|2|2|2|3|3|3|4|4|2|1|3|3|3|5|5|5|1|1|3|2|2|2|5|5|5|3|3$};
     \node at (0,1.3) {$\Downarrow$ The map $f$: $f(1)=5$, $f(2)=6$, $f(3)=3$, $f(4)=7$, $f(5)=4$, then add the tails $\Downarrow$};
     \node at (0,1)    {$3|6|5|3|6|4|3|6|7|3|5|4|3|5|6|3|5|7|3|4|6|5|4|6|7|4|6|3|4|5|7|4|5|6|4|5|3|4|7|6|5|7|6|3|7|6|4|7|5|3|7|5|6|7|5|4|7$};
     \node at (0,0.8)  {$7|3|6|5|3|6|4|3|6|7|3|5|4|3|5|6|3|5|7|3|4|6|5|4|6|7|4|6|3|4|5|7|4|5|6|4|5|3|4|7|6|5|7|6|3|7|6|4|7|5|3|7|5|6|7|5|4$};
     \node at (0,0.6)  {$4|7|3|6|5|3|6|4|3|6|7|3|5|4|3|5|6|3|5|7|3|4|6|5|4|6|7|4|6|3|4|5|7|4|5|6|4|5|3|4|7|6|5|7|6|3|7|6|4|7|5|3|7|5|6|7|5$};
     \node at (0,0.4)  {$5|4|7|7|7|5|5|5|4|4|6|7|7|7|4|4|4|6|6|5|7|3|3|3|5|5|5|7|7|6|3|3|3|7|7|7|6|6|5|3|4|4|4|5|5|5|3|3|6|4|4|4|3|3|3|6|6$};
     \node at (0,0.2)  {$6|5|4|4|4|7|7|7|5|5|4|6|6|6|7|7|7|4|4|6|5|7|7|7|3|3|3|5|5|7|6|6|6|3|3|3|7|7|6|5|3|3|3|4|4|4|5|5|3|6|6|6|4|4|4|3|3$};
     \node at (0,0)    {$2|2|2|2|2|2|2|2|2|2|2|2|2|2|2|2|2|2|2|2|2|2|2|2|2|2|2|2|2|2|2|2|2|2|2|2|2|2|2|2|2|2|2|2|2|2|2|2|2|2|2|2|2|2|2|2|2$};
     \node at (0,-0.2) {$1|1|1|1|1|1|1|1|1|1|1|1|1|1|1|1|1|1|1|1|1|1|1|1|1|1|1|1|1|1|1|1|1|1|1|1|1|1|1|1|1|1|1|1|1|1|1|1|1|1|1|1|1|1|1|1|1$};
     \node at (2.05,-0.4) {$|\star---~cut~---\star|$};
     \node at (-2.95,-0.4) {$\uparrow$ insert here};
     \node at (0,-0.6) {$\Downarrow$};

     \node at (0,-0.8)    {$3|6|5|3|4|7|6|5|7|6|3|7|6|4|7|5|3|6|4|3|6|7|3|5|4|3|5|6|3|5|7|3|4|6|5|4|6|7|4|6|3|4|5|7|4|5|6|4|5|3|7|5|6|7|5|4|7$};
     \node at (0,-1)  {$7|3|6|5|3|4|7|6|5|7|6|3|7|6|4|7|5|3|6|4|3|6|7|3|5|4|3|5|6|3|5|7|3|4|6|5|4|6|7|4|6|3|4|5|7|4|5|6|4|5|3|7|5|6|7|5|4$};
     \node at (0,-1.2)  {$4|7|3|6|5|3|4|7|6|5|7|6|3|7|6|4|7|5|3|6|4|3|6|7|3|5|4|3|5|6|3|5|7|3|4|6|5|4|6|7|4|6|3|4|5|7|4|5|6|4|5|3|7|5|6|7|5$};
     \node at (0,-1.4)  {$5|4|7|7|6|5|3|4|4|4|5|5|5|3|3|6|4|7|5|5|5|4|4|6|7|7|7|4|4|4|6|6|5|7|3|3|3|5|5|5|7|7|6|3|3|3|7|7|7|6|4|4|3|3|3|6|6$};
     \node at (0,-1.6)  {$6|5|4|4|7|6|5|3|3|3|4|4|4|5|5|3|6|4|7|7|7|5|5|4|6|6|6|7|7|7|4|4|6|5|7|7|7|3|3|3|5|5|7|6|6|6|3|3|3|7|6|6|4|4|4|3|3$};
     \node at (0,-1.8)    {$2|2|2|2|2|2|2|2|2|2|2|2|2|2|2|2|2|2|2|2|2|2|2|2|2|2|2|2|2|2|2|2|2|2|2|2|2|2|2|2|2|2|2|2|2|2|2|2|2|2|2|2|2|2|2|2|2$};
     \node at (0,-2) {$1|1|1|1|1|1|1|1|1|1|1|1|1|1|1|1|1|1|1|1|1|1|1|1|1|1|1|1|1|1|1|1|1|1|1|1|1|1|1|1|1|1|1|1|1|1|1|1|1|1|1|1|1|1|1|1|1$};
     \node at (-1.02,-2.2) {$\uparrow$};
     \node at (-1.02,-2.3) {$c_3,c_9$};
     \node at (0.61,-2.2) {$\uparrow$};
     \node at (0.61,-2.3) {$c_2,c_{12}$};
     \node at (1.17,-2.2) {$\uparrow$};
     \node at (1.17,-2.3) {$c_5,c_7$};
     \node at (1.98,-2.2) {$\uparrow$};
     \node at (1.98,-2.3) {$c_4,c_8$};
     \node at (2.95,-2.4) {$\uparrow$};
     \node at (2.95,-2.5) {$c_{10},c_{11}$};
     \node at (3.2,-2.2) {$\uparrow$};
     \node at (3.2,-2.3) {$c_1,c_6$};

\end{tikzpicture}
\caption{Constructing a $\mathcal{K}$-snake of size 2517 in $S_7$.} \label{sew}
\end{figure}

We conjecture that this framework is feasible for all odd integers. Its validity strongly depends on the structure of the $\mathcal{K}$-snakes constructed in the framework of Horovitz and Etzion. We have remarked that a $\mathcal{K}$-snake in $S_{2n-1}$ constructed by Horovitz and Etzion has the property that its transitions sequence only consists of $t_{2n-1}$ and $t_{2n-3}$. Starting from such a $\mathcal{K}$-snake with a properly chosen map $f:\{1,2,\dots,2n-1\}\rightarrow\{3,4,\dots,2n+1\}$ and then adding the tails $(2,1)$, we get a $\mathcal{K}$-snake whose transitions sequence only consists of $t_{2n-1}$ and $t_{2n-3}$. Similarly as above, we may do some ``sewing and mending" to the $\mathcal{K}$-snake, due to the fact that $t^{-1}_{2n-3} t_{2n-1} t^{-1}_{2n-3}(\pi)=t^{-1}_{2n-1} t_{2n-3} t^{-1}_{2n-1}(\pi)$ for every $\pi\in S_{2n+1}$. We may cut off the segment from $t_{2n-3}(\pi)$ to $t_{2n-3}^{-1}t_{2n-1}(\pi)$, sew $\pi$ and $t_{2n-1}(\pi)$ together, and then insert the segment at the position between $t^{-1}_{2n-1} t_{2n-3}(\pi)$ and $t_{2n-3} t^{-1}_{2n-1} t_{2n-3}(\pi)$ as long as $t^{-1}_{2n-1} t_{2n-3}(\pi)$ and $t_{2n-3} t^{-1}_{2n-1} t_{2n-3}(\pi)$ are not within the segment cut off. This modification brings in more p-transitions $t_{2n-1}$ into the transitions sequence of the $\mathcal{K}$-snake without deleting any existing $t_{2n-1}$. The position between two consecutive codewords $(\alpha,x,2,1)$ and $(x,\alpha,2,1)$ for some $x>5$ will work as a choice of inserting the two chains containing $[1,\alpha,x,2]$ and $[2,\alpha,1,x]$ respectively. Besides, $\mathcal{G}_{2n+1}$ has a lot of matchings so it is very possible to find a matching whose edges all correspond to applicable insertions. All these optimistic evidences indicate the validity of this framework. Yet a strict mathematical proof still requires further analysis.

Summing up the above, we have the following conjecture:

\begin{conjecture} \label{conj}
There exists a $(2n+1,M_{2n+1},\mathcal{K})$-snake with $M_{2n+1}=\frac{(2n+1)!}{2}-2n+3$ for every $n\ge 3$.
\end{conjecture}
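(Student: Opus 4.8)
The plan is to carry out the $S_7$ construction of Section \ref{sec5} uniformly in $n$, so this is a direct construction rather than an induction on the size formula itself. The single building block is a Horovitz--Etzion $\mathcal{K}$-snake $\mathcal{S}$ in $S_{2n-1}$, whose existence and defining property --- that its transitions sequence uses only $t_{2n-3}$ and $t_{2n-1}$ --- are supplied by the spanning-tree construction validated in Section \ref{sec4}; its size is $\frac{(2n-1)!}{2}-2n+3$. First I would relabel $\mathcal{S}$ through an injection $f\colon\{1,\dots,2n-1\}\to\{3,\dots,2n+1\}$ and append the fixed tail $(2,1)$ to every codeword, obtaining a sequence $\mathcal{S}'$ lying entirely inside the class $[2,1]$ of $S_{2n+1}$; choosing $f$ with the correct sign makes every codeword of $\mathcal{S}'$ even, and the embedding preserves the transition set $\{t_{2n-3},t_{2n-1}\}$. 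The arithmetic is then forced: $\mathcal{S}'$ omits exactly $2n-3$ of the $\frac{(2n-1)!}{2}$ permutations of the class $[2,1]$, and every even permutation outside $[2,1]$ lies in one of the $\frac{(2n-2)!}{2}$ chains, so a construction that absorbs all chains into $\mathcal{S}'$ produces a snake of size $\frac{(2n+1)!}{2}-(2n-3)$, which is the target.

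Absorption is governed by Lemma \ref{merge}: at any $t_{2n-1}$-transition of $\mathcal{S}'$, a consecutive pair $(\alpha,x,2,1),(x,\alpha,2,1)$ with $x>5$, three $t_{2n+1}$-operations splice in the two chains through $[1,\alpha,x,2]$ and $[2,\alpha,1,x]$. Each splice consumes one such \emph{insertion site} and brings in two chains, so absorbing all chains is equivalent to choosing a perfect matching of $\mathcal{G}_{2n+1}$ whose every edge is realized by an available insertion site in $\mathcal{S}'$. The matching-existence half is clean and I would prove it by induction using Lemma \ref{induction}: for $n\ge4$ the $(2n-3)(2n-2)$ components are copies of $\mathcal{G}_{2n-1}$ of even order $\frac{(2n-4)!}{2}$, so a perfect matching is obtained by taking the inductively guaranteed perfect matching inside each component, the base case $\mathcal{G}_7$ (twelve singleton components, matched across by the cycle of Figure \ref{G7}) being exactly the $S_7$ computation. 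Thus existence of \emph{some} perfect matching is never the issue.

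The remaining and decisive step is realizability: for the chosen matching edge with linkage $[\alpha,x]-[2,1]$ the pair $(\alpha,x,2,1),(x,\alpha,2,1)$ must actually occur consecutively in $\mathcal{S}'$. This is where the identity $t^{-1}_{2n-3}t_{2n-1}t^{-1}_{2n-3}(\pi)=t^{-1}_{2n-1}t_{2n-3}t^{-1}_{2n-1}(\pi)$ of Section \ref{sec5} enters. I would prove a site-creation lemma: given a target linkage sitting in a sufficiently long run of $t_{2n-3}$-transitions of $\mathcal{S}'$, one local application of the identity --- cutting the segment from $t_{2n-3}(\pi)$ to $t^{-1}_{2n-3}t_{2n-1}(\pi)$, sewing $\pi$ to $t_{2n-1}(\pi)$, and reinserting --- produces the required $t_{2n-1}$ site while destroying no previously created site and keeping the sequence a single cycle. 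Iterating over the edges of the matching should convert $\mathcal{S}'$ into a snake carrying precisely the insertion sites demanded, after which all splices are performed simultaneously.

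The hard part, and the reason the authors leave Conjecture \ref{conj} open, is the global, simultaneous legality of these sewings. A single sewing is valid only when the endpoints $t^{-1}_{2n-1}t_{2n-3}(\pi)$ and $t_{2n-3}t^{-1}_{2n-1}t_{2n-3}(\pi)$ of its reinsertion interval lie outside the segment it cuts, and the $\frac{(2n-2)!}{4}$ sewings --- one per matching edge --- must be pairwise non-interfering and must not disconnect the cycle. Controlling these overlap conditions in the abstract, rather than by the hand-verification used for $S_7$, demands precise knowledge of how $t_{2n-1}$- and $t_{2n-3}$-transitions are distributed along the Horovitz--Etzion snake $\mathcal{S}$, which the recursive description of Section \ref{sec4} encodes only implicitly. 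I expect the genuine obstacle to be choosing the relabelling $f$ and the perfect matching \emph{together}, so that every linkage used by the matching already falls inside a long, mutually disjoint $t_{2n-3}$-run of $\mathcal{S}'$ where site-creation is unobstructed; making such a coordinated choice uniform in $n$ is what remains to be done.
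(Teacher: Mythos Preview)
The paper does not prove this statement: it is stated as a conjecture, supported only by the explicit $S_7$ example in Figure~\ref{sew} and by the heuristic paragraph preceding it. Your proposal is faithful to that heuristic --- embed a Horovitz--Etzion $(2n-1)$-snake into the class $[2,1]$ via a relabelling $f$, then insert the $\frac{(2n-2)!}{2}$ chains in pairs at $t_{2n-1}$-sites, manufacturing missing sites with the sewing identity $t^{-1}_{2n-3}t_{2n-1}t^{-1}_{2n-3}=t^{-1}_{2n-1}t_{2n-3}t^{-1}_{2n-1}$ --- and you correctly identify the obstruction the paper leaves open: controlling the simultaneous, non-interfering legality of all the sewings for a uniformly chosen matching and relabelling.

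So there is no discrepancy in approach, but there is also no proof on either side. Two points where your write-up over-promises relative to what is actually established: first, your matching argument via Lemma~\ref{induction} shows only that \emph{some} perfect matching of $\mathcal{G}_{2n+1}$ exists, not that one exists whose every edge is signed $M[x]$ with $x>5$ (required for the insertion mechanism) and whose linkages are realizable after sewing --- the base case $\mathcal{G}_7$ you invoke in fact uses only $M[6]$ and $M[7]$ edges because those are the only signs present, but inside each component of $\mathcal{G}_{2n+1}$ the inductive matching may well use $M[x]$ edges for smaller $x$, and those do not correspond to $t_{2n-1}$-sites of $\mathcal{S}'$. Second, your ``site-creation lemma'' is asserted rather than proved, and its hypothesis (a ``sufficiently long run of $t_{2n-3}$-transitions'') is exactly the structural information about the Horovitz--Etzion snake that nobody has extracted. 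As you yourself conclude, the coordinated choice of $f$ and the matching is what remains to be done; until that is supplied, the proposal is an accurate restatement of the open problem, not a proof.
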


If we do the same procedure as above from an initial snake in our construction (or possibly some other snakes with the same size), rather than a Horovitz-Etzion snake, there might be a slim chance of doing better! However, the transitions sequence of our snake does not have many p-transitions $t_{2n+1}$, and also lacks applicable ``sewing and mending" modifications. So compared with Conjecture \ref{conj}, the following conjecture is a little pessimistic.

\begin{conjecture}
There exists a $(2n+1,M_{2n+1},\mathcal{K})$-snake with $M_{2n+1}>\frac{(2n+1)!}{2}-2n+3$ or even $M_{2n+1}=\frac{(2n+1)!}{2}-3$ for every $n\ge 3$.
\end{conjecture}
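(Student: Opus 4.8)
The plan is to promote the $S_7$ construction of Section \ref{sec5} into an induction on $n$ that never discards more than the three permutations already absent from the base snake, so that the deficit stays frozen at $3$ forever. Writing $N=2n+1$, I would prove by induction the statement: \emph{for every odd $N\ge 5$ there is an $(N,\frac{N!}{2}-3,\mathcal K)$-snake all of whose p-transitions are odd-indexed, i.e. lie in $\{t_3,t_5,\dots,t_N\}$}. The base cases are already in hand: $N=5$ is the size-$57$ snake of Figure \ref{57} and $N=7$ is the size-$2517$ snake of Figure \ref{sew}. For the inductive step I would take the near-maximal snake in $S_{N-2}$, transport it into the class $[2,1]$ of $S_N$ via a bijection $f\colon\{1,\dots,N-2\}\to\{3,\dots,N\}$ followed by appending the tail $(2,1)$, and then splice in \emph{all} $\frac{(N-3)!}{2}$ chains. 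Since the chains exhaust every class except $[2,1]$, while the transported snake covers all of $[2,1]$ except the three inherited permutations, the total is $\frac{N!}{2}-\frac{(N-2)!}{2}+\bigl(\frac{(N-2)!}{2}-3\bigr)=\frac{N!}{2}-3$, exactly the target. The three missing words always sit in the innermost $[2,1]$-class and descend to three fixed permutations of $A_5$.

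Two preparatory observations make the transport legitimate, exactly as in Section \ref{sec5}. First, the parity of $f$ must be chosen (against the fixed transposition contributed by the tail $(2,1)$) so that every transported word $[f(a_1),\dots,f(a_{N-2}),2,1]$ is even; since all inner codewords are even, a single parity choice of $f$ makes them all even, and such an $f$ exists. Second, because positions $1,\dots,N-2$ are untouched by $f$ and by the tail, a p-transition $t_i$ of the $S_{N-2}$ snake becomes the same $t_i$ in $S_N$; in particular the inductive hypothesis guarantees that the transported snake already contains $t_{N-2}$ transitions (inherited from the top-level insertions of the inner construction). By Lemma \ref{merge}, two consecutive codewords $[\alpha,x,2,1]$ and $[x,\alpha,2,1]$ (related by $t_{N-2}$, with $x\ge 6$) admit the insertion of the two chains containing $[1,\alpha,x,2]$ and $[2,\alpha,1,x]$; by Theorem \ref{M[X]} this is precisely the $M[x]$-connection along the edge of $\mathcal G_N$ whose label is the necklace $[\alpha,x]-[2,1]$. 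I would enlarge the supply of such slots using the braid identity $t^{-1}_{N-4}t_{N-2}t^{-1}_{N-4}=t^{-1}_{N-2}t_{N-4}t^{-1}_{N-2}$, which introduces additional $t_{N-2}$ transitions without destroying existing ones. A useful point here, absent from the distinct-label regime of Section \ref{sec4}, is that distinct rotations of the \emph{same} necklace realise distinct $M[x]$-edges, so a single necklace may legitimately host several slots inserting several disjoint pairs of chains.

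The heart of the argument is then a matching statement: one must exhibit a perfect matching of the set of chains in $\mathcal G_N$ whose edges can be simultaneously realised as $t_{N-2}$ insertion slots of the sewn $[2,1]$-snake. Existence of such a matching as a purely graph-theoretic object looks accessible. The number of chains is $\frac{(N-3)!}{2}$, which is even for all $N\ge 7$; every edge of $\mathcal G_N$ carries a sign $M[x]$ with $x\ge 6$, so the slot condition $x\ge 6$ is automatic; and a perfect matching can be built inductively from the component decomposition of Lemma \ref{induction}, taking a perfect matching inside each copy $C_{i,j}\cong\mathcal G_{N-2}$, based at the explicit perfect matching of $\mathcal G_7$ read off from Figure \ref{sew}. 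Moreover the candidate slots are abundant: each $[2,1]$-necklace supplies $N-5$ candidate rotations (one per element $\ge 6$ placed at position $N-2$), so the pool of $\frac{(N-3)!}{2}(N-5)$ candidate slots dwarfs the $\frac{(N-3)!}{4}$ edges a perfect matching demands.

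The genuinely hard part -- and the reason the statement is only conjectured -- is \emph{realisability}: one must guarantee that the particular necklaces chosen by the matching actually occur as $t_{N-2}$ transitions of the specific, recursively determined snake (after sewing), with all splices at distinct, mutually non-interfering positions. I would attack this by phrasing realisability as a bipartite system-of-distinct-representatives problem, chains against realised slots, and verifying Hall's condition, while exploiting the self-similarity whereby the $t_{N-2}$ transitions inherited from the inner snake line up with the component structure $C_{i,j}\cong\mathcal G_{N-2}$, so that an inductively chosen matching inside each component is realised by inductively controlled slots. I expect the verification of Hall's condition, together with the conflict-freeness of performing all required sewings at once, to be the main obstacle: controlling exactly which $t_{N-2}$ adjacencies the recursion produces, and hence which matching edges are simultaneously attainable, is what currently resists a clean proof. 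Finally, I note that the weaker clause $M_{2n+1}>\frac{(2n+1)!}{2}-2n+3$ is far more robust; any realisable matching covering all but a bounded number of chains already beats Conjecture \ref{conj}, so even a partial resolution of the realisability problem would settle the first alternative of the statement.
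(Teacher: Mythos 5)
The statement you are trying to prove is not proved in the paper at all: it is posed as an open conjecture, and the authors themselves present it as the \emph{pessimistic} variant precisely because the iteration you propose runs into an obstruction they explicitly name. Your proposal is, by your own admission, a program rather than a proof: the decisive step --- showing that a perfect matching of the $\frac{(N-3)!}{2}$ chains can be \emph{simultaneously realised} as distinct, non-interfering $t_{N-2}$ slots of the recursively built $[2,1]$-snake --- is deferred to an unverified Hall-type argument. Everything before that step (transport via $f$ plus the tail $(2,1)$, parity adjustment, the arithmetic $\frac{N!}{2}-\frac{(N-2)!}{2}+\frac{(N-2)!}{2}-3=\frac{N!}{2}-3$, the slot mechanism of Lemma \ref{merge}, the abundance count of candidate rotations) matches the paper's own heuristic outline in Section \ref{sec5}, so you have reproduced the conjecture's supporting evidence, not closed it. Worse, your key enabling claim --- that the braid identity $t^{-1}_{N-4}t_{N-2}t^{-1}_{N-4}=t^{-1}_{N-2}t_{N-4}t^{-1}_{N-2}$ supplies extra $t_{N-2}$ slots ``without destroying existing ones'' --- is exactly where the paper says the induction breaks: the deficit-$3$ snake has a \emph{mixed} transitions sequence ($t_3,t_5,\dots,t_{N-2}$ after transport), and the authors warn that their snake ``does not have many p-transitions $t_{2n+1}$'' and ``lacks applicable sewing and mending modifications.'' The rewrite rule only applies to segments of a specific two-index local pattern (which the homogeneous Horovitz--Etzion snake supplies and the mixed snake generally does not), and the non-interference of many simultaneous cut-and-paste operations is never established.

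There is also a concrete error in your closing robustness remark. A realisable matching ``covering all but a bounded number of chains'' does \emph{not} beat Conjecture \ref{conj}: since all $[2,1]$-necklaces are already embedded in the core snake, they cannot serve as Horovitz--Etzion linkages for leftover chains, so every unmatched chain is simply excluded, and each chain costs $\bigl(N(N-1)-1\bigr)(N-2)$ codewords (e.g.\ $497$ in $S_9$), vastly more than the margin $2n-6$ separating $\frac{(2n+1)!}{2}-3$ from $\frac{(2n+1)!}{2}-2n+3$. Hence only a \emph{perfect} matching, realised in full, yields anything at or above the conjectured sizes; even the ``weaker clause'' of the statement gets no discount from a near-perfect matching. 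In short: the base cases and the accounting are right, the architecture is the same one the paper sketches, but the realisability step you flag as hard is the entire content of the conjecture, and your two auxiliary claims (sewing applicability in the mixed regime, and tolerance of uncovered chains) are respectively unsupported and false.
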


A final remark is that ``greed is part of human nature". The possibility of $M_{2n+1}=\frac{(2n+1)!}{2}$, however impossible, is not yet denied.

\section{Conclusions and future research} \label{sec6}

Snake-in-the-box codes in $S_n$ under Kendall's $\tau$-metric are useful in the framework of rank modulation for flash memories. In this paper we verify the validity and complete the construction of snake-in-the-box-codes by Horovits and Etzion, with size $M_{2n+1}=\frac{(2n+1)!}{2}-2n+1$. Based on their framework, we further give a construction aiming at a snake-in-the-box-code of size $M_{2n+1}=\frac{(2n+1)!}{2}-2n+3$. We conjecture that our framework is feasible for all odd integers $2n+1\ge7$ and give an example $M_7=2517$. A strict proof for the general validity of our framework is considered for future research.

\end{document}